\documentclass[11pt,a4paper]{article}

\usepackage[utf8]{inputenc}
\usepackage[T1]{fontenc}
\usepackage{lmodern}
\usepackage[margin=1in]{geometry}
\usepackage{amsmath,amssymb,amsthm}
\usepackage{booktabs}
\usepackage{tabularx}
\usepackage{array}
\usepackage{graphicx}
\usepackage{enumitem}
\usepackage{listings}
\usepackage{xcolor}
\usepackage[bookmarks=true,colorlinks=true,linkcolor=blue,citecolor=blue,urlcolor=blue]{hyperref}
\usepackage{doi}
\usepackage{microtype}
\usepackage{float}

\newtheorem{proposition}{Proposition}

\lstdefinelanguage{json}{
  basicstyle=\ttfamily\small,
  string=[s]{"}{"},
  stringstyle=\color{blue!60!black},
  comment=[l]{//},
  commentstyle=\color{gray},
  morecomment=[s]{/*}{*/},
  literate=
    *{:}{{{\color{black}:}}}{1}
    {,}{{{\color{black},}}}{1}
}

\newcolumntype{L}[1]{>{\raggedright\arraybackslash}p{#1}}
\newcolumntype{C}[1]{>{\centering\arraybackslash}p{#1}}

\title{\textbf{Before the Tool Call: Deterministic Pre-Action Authorization\\for Autonomous AI Agents}}

\author{
  Uchi Uchibeke\\
  APort Technologies Inc.\\
  Toronto, Canada\\
  \texttt{uchi@aport.io}
}

\date{March 2026}

\begin{document}

\maketitle

\begin{center}
\small
\textbf{arXiv Preprint} \quad
\textbf{Categories:} cs.CR (primary), cs.AI (secondary) \quad
\textbf{License:} CC BY 4.0
\end{center}

\begin{abstract}
AI agents today have passwords but no permission slips. They execute tool calls (fund transfers, database queries, shell commands, sub-agent delegation) with no standard mechanism to enforce authorization before the action executes. Current safety architectures rely on model alignment (probabilistic, training-time) and post-hoc evaluation (retrospective, batch). Neither provides deterministic, policy-based enforcement at the individual tool call level.

We characterize this gap as the \textbf{pre-action authorization problem} and present the \textbf{Open Agent Passport (OAP)}, an open specification and reference implementation that intercepts tool calls synchronously before execution, evaluates them against a declarative policy, and produces a cryptographically signed audit record. OAP enforces authorization decisions in a measured median of 53\,ms ($N{=}1{,}000$). In a live adversarial testbed (4,437 authorization decisions across 1,151 sessions, \$5,000 bounty), social engineering succeeded against the model 74.6\% of the time under a permissive policy; under a restrictive OAP policy, a comparable population of attackers achieved a 0\% success rate across 879 attempts.

We distinguish pre-action authorization from sandboxed execution (contains blast radius but does not prevent unauthorized actions) and model-based screening (probabilistic), and show they are complementary. The same infrastructure that enforces security constraints (spending limits, capability scoping) also enforces quality gates, operational contracts, and compliance controls. The specification is released under Apache 2.0 (DOI: 10.5281/zenodo.18901596).
\end{abstract}

\section{Introduction}

\subsection{The Authorization Gap}

The gap between what AI agents \emph{can do} and what they \emph{should do} is an authorization problem, not an alignment problem. Just as early multi-user systems lacked role-based access control and early web APIs lacked delegated authorization (OAuth), autonomous agents today lack a standard mechanism to enforce per-action authorization before execution.

These agents are systems that perceive, reason, and act across production environments, deployed at scale. They execute real-world actions through \textbf{tool calling}: the model generates a structured request to execute a named function with specific parameters. Tool calls transfer funds, query databases, execute code, send communications, and delegate tasks to sub-agents.

The decision to execute a tool call is currently made in one of two places: the model itself (via alignment training) or the application layer (via ad hoc input validation). Neither constitutes a security-grade authorization layer. Neither enforces a declarative policy. Neither produces a verifiable audit record.

Industry evidence from Q1 2026 confirms the gap is structural. A comprehensive survey of the agentic AI attack surface~\cite{survey2026} categorizes threats across the agent lifecycle and finds that no standard authorization layer exists at the tool call boundary. Specific findings:

\begin{itemize}[nosep]
  \item A red-teaming study deployed six autonomous agents with persistent memory and shell access over 14 days; agents leaked SSNs when a single verb was reframed, destroyed their own infrastructure to protect secrets, and complied with unauthorized parties~\cite{agentsofchaos}.
  \item An industry survey found that \textbf{27.2\% of engineering teams have abandoned framework-provided authorization and reverted to custom, hardcoded logic}~\cite{gravitee}.
  \item Security researchers enumerated \textbf{492+ MCP servers exposed without authentication or encryption} in production~\cite{invariantlabs}.
  \item CVE-2026-26118 (CVSS 8.8): an SSRF vulnerability in Azure's MCP Server was exploited via a valid authenticated credential. The attacker had valid access; no policy governed execution~\cite{cve2026}.
  \item An industry study found that 90\% of government organizations lack purpose-binding controls for AI agents, with purpose binding at just 37\% across all industries~\cite{kiteworks}.
  \item Cisco's threat research team documented the \textbf{ClawHavoc supply-chain attack}: 824+ malicious skills (approximately 20\% of the OpenClaw skill registry) were injected with prompt injection payloads, hidden reverse shells, and token exfiltration routines. Separately, 42,665 exposed OpenClaw instances were found, with 93.4\% having authentication bypass conditions~\cite{cisco}.
\end{itemize}

These are not alignment failures. They are authorization failures. Pre-action authorization infrastructure (identity, policy, cryptographic proof) serves not only security but broader accountability: the same mechanism that enforces spending limits can enforce quality gates, operational contracts, and compliance controls.

\subsection{Contributions}

This paper makes four contributions:

\begin{enumerate}[nosep]
  \item \textbf{Characterizes the pre-action authorization problem} as distinct from model alignment, post-hoc evaluation, and sandboxed execution (Section~\ref{sec:background}).
  \item \textbf{Presents OAP}, an open specification for pre-action authorization, with a formal threat model and authorization function (Sections~\ref{sec:oap}--\ref{sec:formal}).
  \item \textbf{Shows pre-action authorization, sandboxed execution, and model-based screening are complementary architectures}, not competing ones (Section~\ref{sec:architectures}).
  \item \textbf{Evaluates OAP} against 4,437 adversarial decisions across 1,151 attack sessions and production benchmarks at $N{=}1{,}000$ (Section~\ref{sec:evaluation}). Section~\ref{sec:standards} maps OAP to NIST AI RMF, OWASP Agentic Top 10, and SAFE-MCP.
\end{enumerate}

\section{Background and Related Work}
\label{sec:background}

\subsection{Model Alignment}

Training-time alignment (RLHF, RLAIF, constitutional AI) shapes a model's response distribution toward desired behaviors. Frontier model developers treat it as their primary safety investment. Alignment is probabilistic: it shifts behavior across the distribution but does not guarantee any individual output. Prompt injection, jailbreaks, and multi-agent delegation can override alignment in production environments~\cite{greshake2023,zhan2023}. Alignment cannot be updated without retraining, and policy changes require model redeployment.

\subsection{Post-Hoc Evaluation}

Automated evaluation tools (Promptfoo, Galileo, Haize Labs) test AI agents by running them against adversarial scenarios and scoring outputs. OpenAI's acquisition of Promptfoo in March 2026~\cite{openai_promptfoo} and Galileo's open-source Agent Control framework with integrations from Cisco, CrewAI, and AWS~\cite{galileo} have established evaluation as a commercial category. Post-hoc evaluation is valuable but structurally retrospective: it finds patterns across completed runs but does not intercept individual tool calls in production.

\subsection{Sandboxed Execution}

A distinct class of approaches enforces safety by isolating agent execution in sandboxed environments. These systems operate at the kernel or hypervisor level, containing the effects of agent actions rather than preventing them:

\textbf{NVIDIA NemoClaw} (March 2026, GTC announcement)~\cite{nemoclaw} wraps the OpenClaw agent framework in enterprise security controls via three components: a CLI plugin, a versioned blueprint for orchestration, and an OpenShell runtime (K3s-based container). NemoClaw enforces kernel-level network allowlisting, filesystem write restrictions, and configuration file protection, with an out-of-process policy engine that cannot be overridden by compromised agents. A privacy router directs sensitive data to local Nemotron models.

\textbf{ceLLMate}~\cite{cellmate} sandboxes browser-using agents at the HTTP layer: an ``Agent Sitemap'' maps requests to semantic actions, and a Chrome extension enforces default-deny policies on outbound traffic. Frontier LLMs achieved $>$94\% policy prediction accuracy and blocked all 12 emulated prompt injection attacks with 7--15\% latency overhead. ceLLMate demonstrates that deterministic enforcement outside the agent is viable even for browser-based workflows, though it operates at the network layer rather than the agent action layer.

\textbf{E2B}~\cite{e2b} provides Firecracker microVM sandboxes (${\sim}$150\,ms startup) purpose-built for AI agent code execution. \textbf{Modal}~\cite{modal} uses gVisor for user-space kernel isolation. \textbf{Google Agent Sandbox}~\cite{google_adk} provides Kubernetes-based isolation with Vertex AI integration.

NVIDIA's own security guidance for agentic workflows states that ``application-level controls are insufficient'' because once control passes to a subprocess, the application loses visibility~\cite{nvidia_guidance}. They recommend fully virtualized environments and emphasize that each dangerous action should require fresh user confirmation.

Sandboxed execution catches code execution attacks, filesystem manipulation, network exfiltration, and resource abuse. It does not enforce semantic business policies (spending limits, data classification restrictions, recipient allowlists), does not produce per-action authorization audit trails, and does not prevent legitimate-looking API calls with malicious parameters within the sandbox's permitted scope.

\subsection{Runtime Policy Enforcement}

Several recent systems address runtime policy enforcement directly:

\textbf{PCAS (Policy Compiler for Secure Agentic Systems)}~\cite{pcas} implements deterministic policy enforcement via a reference monitor using a Datalog-derived policy language. PCAS improved policy compliance from 48\% to 93\% across frontier models, reporting zero policy violations in instrumented runs. PCAS is the most directly comparable academic work to OAP. It differs in policy language (Datalog-derived vs.\ declarative JSON/YAML), deployment model (compiler-based reference monitor vs.\ framework hook + cloud API), and scope (per-agent policy compilation vs.\ portable passport credential).

\textbf{AgentGuardian}~\cite{agentguardian} learns context-aware access-control policies from execution traces, generating adaptive policies that regulate tool calls based on real-time input context and control flow dependencies. AgentGuardian trades determinism for adaptability: learned policies can cover novel attack patterns but lack the guarantees of static policy evaluation.

\textbf{Safiron}~\cite{safiron} uses a guardian model trained via synthetic risky trajectories (AuraGen) and GRPO reinforcement learning to screen agent plans before execution. It intervenes at the right point (pre-execution) but inherits the non-determinism of its model-based classifier; like all model-based classifiers, its effectiveness is bounded by the guardian model's robustness to adversarial inputs.

\textbf{Proof-of-Guardrail}~\cite{proofofguardrail} addresses the orthogonal verification problem: using Trusted Execution Environments (AWS Nitro Enclaves) and remote attestation to produce cryptographic proof that a guardrail actually executed. If pre-action authorization runs inside a TEE, the decision audit trail becomes independently verifiable. Proof-of-Guardrail does not itself enforce policy.

\textbf{Google ADK Safety}~\cite{google_adk} implements \texttt{before\_model\_callback} and \texttt{before\_tool\_callback} hooks with plugin-based security policies, using lightweight models (Gemini Flash Lite) as screening layers. Model-based screening remains probabilistic; the screening model can itself be manipulated.

\textbf{Microsoft Defender for AI Agents}~\cite{microsoft_defender} (January 2026) implements a real-time security evaluator that analyzes the intent and destination of every agent action, combining pre-action evaluation with sandboxed execution.

Bhattarai and Vu~\cite{bhattarai} argue that training and alignment alone cannot fix the fundamental architectural flaw in agentic systems, identifying a ``Lethal Trifecta'' (untrusted inputs, privileged data access, external action capability) and proposing a ``Trinity Defense Architecture'' of action governance, information-flow control, and privilege separation.

\subsection{Standards and Identity}

OAuth 2.0~\cite{oauth} and OpenID Connect solve delegated API authorization for human principals. IETF WIMSE addresses workload identity for service-to-service communication. SPIFFE/SVID provides workload identity. These are identity primitives: they identify \emph{who} the agent is, not \emph{what actions} it is permitted to take per-call. OAP is complementary: an agent may hold an OAuth token for API access, a SPIFFE identity for mutual TLS, \emph{and} an OAP passport for tool call authorization.

The \textbf{Linux Foundation Agentic AI Infrastructure Foundation (AAIF)}, with founding members AWS, Anthropic, Block, Bloomberg, Cloudflare, Google, Microsoft, and OpenAI, was constituted to address infrastructure gaps in agentic AI~\cite{aaif}. Pre-action authorization is not yet claimed by any AAIF working group.

The \textbf{OWASP Top 10 for Agentic Applications} (2026)~\cite{owasp_agentic} enumerates the primary security risks for agent systems, including Agent Goal Hijack, Tool Misuse, and Identity \& Privilege Abuse. The \textbf{OWASP MCP Top 10}~\cite{owasp_mcp} covers protocol-specific risks including model misbinding and context spoofing. \textbf{SAFE-MCP}~\cite{safemcp}, adopted by the Linux Foundation and OpenID Foundation, provides a common security baseline for MCP-enabled systems.

\textbf{NIST} announced the \textbf{AI Agent Standards Initiative} in February 2026~\cite{nist_agents}, a three-pillar framework covering industry-led agent standards, community-led protocol development, and research in AI agent security and identity.

\subsection{Agent-to-Agent Protocols}

The \textbf{Agent2Agent (A2A) Protocol}~\cite{a2a}, contributed by Google and hosted by the Linux Foundation, defines transport and coordination for inter-agent communication. A2A's enterprise guide mandates that agents ``MUST enforce appropriate authorization before performing sensitive actions'' but leaves the mechanism unspecified. A2A's existing Secure Passport Extension~\cite{a2a_passport} enables contextual state sharing (preferences, session data) for personalization. It is structurally different from OAP, which enforces per-action policy. OAP has been proposed as an A2A extension for pre-action authorization (Discussion \#1404~\cite{a2a_discussion}).

\section{The Open Agent Passport (OAP)}
\label{sec:oap}

\subsection{Design Principles}

OAP satisfies six requirements (Table~\ref{tab:requirements}):

\begin{table}[H]
\centering
\small
\begin{tabularx}{\columnwidth}{lX}
\toprule
\textbf{Requirement} & \textbf{Meaning} \\
\midrule
Deterministic & Same inputs, same decision. No sampling, no temperature. \\
Bypass-resistant & Runs at the framework layer, not the model's reasoning layer. Prompt injection cannot override it. \\
Framework-agnostic & Core spec is independent of any agent framework; adapters provide integration. \\
Auditable & Every decision produces a signed, timestamped record. \\
Fail-closed & If the authorization service is unavailable, the tool call is denied. Operators may explicitly configure fail-open for development/migration; the audit log records this override. \\
Implementer-independent & OAP is a specification, not a product. Any organization can build a conforming implementation. \\
\bottomrule
\end{tabularx}
\caption{OAP design requirements.}
\label{tab:requirements}
\end{table}

\subsection{Architecture}

OAP defines three core components (Figure~\ref{fig:architecture}):

\begin{figure}[H]
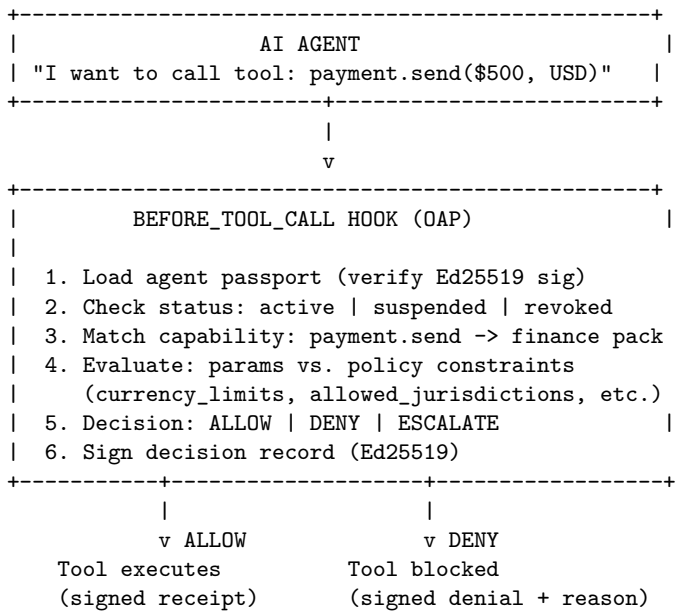

\centering
\begin{lstlisting}[basicstyle=\ttfamily\footnotesize,frame=none,backgroundcolor=\color{white}]
+--------------------------------------------------+
|                   AI AGENT                        |
| "I want to call tool: payment.send($500, USD)"   |
+------------------------+-------------------------+
                         |
                         v
+--------------------------------------------------+
|         BEFORE_TOOL_CALL HOOK (OAP)               |
|                                                    |
|  1. Load agent passport (verify Ed25519 sig)       |
|  2. Check status: active | suspended | revoked     |
|  3. Match capability: payment.send -> finance pack |
|  4. Evaluate: params vs. policy constraints        |
|     (currency_limits, allowed_jurisdictions, etc.) |
|  5. Decision: ALLOW | DENY | ESCALATE             |
|  6. Sign decision record (Ed25519)                 |
+-----------+--------------------+------------------+
            |                    |
            v ALLOW              v DENY
    Tool executes          Tool blocked
    (signed receipt)       (signed denial + reason)
\end{lstlisting}
\caption{OAP authorization flow. The \texttt{before\_tool\_call} hook intercepts every tool call before execution.}
\label{fig:architecture}
\end{figure}

\subsection{The Agent Passport}

An agent passport is a signed credential that binds an agent's identity to its authorized capability scope. Below is an example passport conforming to the OAP specification (fields abbreviated):

\begin{lstlisting}[language=json]
{
  "spec_version": "oap/1.0",
  "agent_id": "ap_117fff4550094005a6c48c8a626c95e4",
  "name": "Acme Research Agent",
  "status": "active",
  "assurance_level": "L2",
  "capabilities": [
    { "id": "web.fetch" },
    { "id": "data.file.read" },
    { "id": "payments.charge" }
  ],
  "limits": {
    "allowed_domains": ["api.github.com","*.acme.internal"],
    "currency_limits": {
      "USD": { "max_per_tx": 5000, "daily_cap": 25000 }
    },
    "allow_pii": false,
    "max_calls_per_minute": 60
  },
  "canonical_hash": "sha256:oOHPz1s/PrmLR7d8kZ...",
  "registry_sig": "ed25519:qf1LiELhoh7/xGRTJrZ...",
  "registry_key_id": "oap:registry:key-2025-01"
}
\end{lstlisting}

In the reference implementation, passports are issued by a registry service and signed with Ed25519 (elliptic-curve digital signature). The specification permits any conforming registry to issue passports. Signatures are verified on every tool call. An agent that attempts a capability outside its passport scope receives an immediate DENY decision with a structured reason code (e.g., \texttt{oap.unknown\_capability}, \texttt{oap.limit\_exceeded}, \texttt{oap.merchant\_forbidden}). The DENY response includes the denial reason code and a human-readable explanation; the agent framework can use this to generate an instructional response to the user (e.g., ``This transfer was denied because the recipient is not in your allowed recipients list'') rather than a generic failure.

\subsection{Policy Packs}

A policy pack is a declarative, versioned specification of authorization constraints for a capability domain. Each pack defines required context fields via JSON Schema, evaluation rules as condition--deny\_code pairs, and minimum assurance level requirements:

\begin{lstlisting}[language=json]
{
  "policy_id": "finance.payment.refund.v1",
  "required_context": {
    "$schema": "http://json-schema.org/draft-07/schema#",
    "required": ["amount", "currency", "reason_code"],
    "properties": {
      "amount": { "type": "number" },
      "currency": { "type": "string" },
      "reason_code": { "type": "string" }
    }
  },
  "rules": [
    { "condition": "amount > limits.max_per_tx",
      "deny_code": "oap.limit_exceeded" },
    { "condition": "currency NOT IN limits.supported",
      "deny_code": "oap.currency_unsupported" },
    { "condition": "reason_code NOT IN limits.codes",
      "deny_code": "oap.blocked_pattern" }
  ],
  "min_assurance": "L2"
}
\end{lstlisting}

The current OAP policy library includes 21 packs covering: financial operations (5: charge, payout, refund, crypto trade, general transaction), data operations (5: file read, file write, export, access, ingest), code and repository (2: merge, release), web (2: fetch, browser), system (1: command execute), messaging (1: send), agent lifecycle (3: session, tool register, task complete), MCP (1: tool execute), and legal (1: contract review).

\subsection{The \texttt{before\_tool\_call} Hook}

The enforcement point is a blocking hook that fires before every tool call (Table~\ref{tab:frameworks}). The hook is awaited before the tool executes; some frameworks implement this as a synchronous call (Cursor, Claude Code), others as an awaited async callback (LangChain, CrewAI). In all cases, the tool call does not proceed until the policy decision is returned. The hook is implemented at the framework/platform level, not in the model's output parsing, so prompt injection that convinces the model to request a tool call does not bypass the policy check.

\begin{table}[H]
\centering
\small
\begin{tabular}{llc}
\toprule
\textbf{Framework} & \textbf{Hook Point} & \textbf{Status} \\
\midrule
OpenClaw & \texttt{before\_tool\_call} plugin & Production \\
Cursor & \texttt{beforeShellExecution} & Production \\
Claude Code & \texttt{PreToolUse} hook & Production \\
LangChain & \texttt{on\_tool\_start} callback & Production \\
CrewAI & \texttt{@before\_tool\_call} decorator & Production \\
n8n & Custom node & Beta \\
A2A Protocol & Message.metadata extension & Proposed \\
\bottomrule
\end{tabular}
\caption{Current OAP framework integrations.}
\label{tab:frameworks}
\end{table}

\subsection{Assurance Level Taxonomy}

OAP defines six assurance levels that map enforcement strength to verification rigor:

\begin{table}[H]
\centering
\small
\begin{tabular}{lll}
\toprule
\textbf{Level} & \textbf{Verification} & \textbf{Example Requirement} \\
\midrule
L0 & Self-attested & Audit-only baseline \\
L1 & Email verified & OAP hook present, fail-closed \\
L2 & GitHub verified & Signed passport, named packs \\
L3 & Domain verified & Human-in-the-loop for high-risk \\
L4KYC & KYC/KYB verified & Government-issued ID \\
L4FIN & Financial data verified & Bank statements; SOC 2-aligned \\
\bottomrule
\end{tabular}
\caption{OAP assurance level taxonomy.}
\label{tab:assurance}
\end{table}

\subsection{Service Discovery}

OAP uses the \texttt{.well-known/oap/} URI (RFC 8615~\cite{rfc8615}) for agent authorization endpoint discovery, following the pattern established by OAuth and OIDC:

\begin{lstlisting}[basicstyle=\ttfamily\small]
GET /.well-known/oap/ HTTP/1.1
Host: api.example.com

-> { "oap_version": "1.0",
     "authorization_endpoint": "https://...",
     "supported_policy_packs": [...] }
\end{lstlisting}

\section{Formal Characterization}
\label{sec:formal}

\subsection{Threat Model}

\textbf{Attacker model.} The adversary can: (a) inject adversarial content into the agent's input context (prompt injection), (b) compromise a sub-agent in a delegation chain, (c) present valid authentication credentials obtained through legitimate or illegitimate means, or (d) exploit tool call parameters within an authorized capability (parameter manipulation).

\textbf{Trust assumptions.} OAP assumes: (a) the agent framework runtime correctly invokes the \texttt{before\_tool\_call} hook (platform trust), (b) the policy evaluation engine executes deterministically (implementation trust), (c) the Ed25519 signing key infrastructure is not compromised (key trust).

\textbf{What OAP defends against.} Unauthorized tool calls, privilege escalation across capability boundaries, policy-violating actions by compromised or manipulated agents, and parameter manipulation that exceeds declared limits (e.g., amount $>$ max\_per\_tx).

\textbf{What OAP does not defend against.} Actions within authorized scope that have unintended side effects, content-level attacks (OAP evaluates actions, not content), compromised framework runtime (the hook itself must be trusted), side-channel attacks, and kernel-level exploits (which require sandboxed execution).

\subsection{The Authorization Function}

Let:
\begin{itemize}[nosep]
  \item $\mathbf{T}$ = a tool call (name, parameters, context)
  \item $\mathbf{P}$ = an agent passport (identity, capabilities, limits, assurance level)
  \item $\boldsymbol{\Pi}$ = a policy pack (declarative rules over $\mathbf{T}$)
  \item $\mathbf{D}$ = authorization decision $\in$ \{ALLOW, DENY, ESCALATE\}
  \item $\mathbf{L}$ = audit log entry (agent, tool, params, decision, timestamp, Ed25519 signature)
\end{itemize}

A pre-action authorization system defines:
\[
  \texttt{authorize}(\mathbf{T}, \mathbf{P}, \boldsymbol{\Pi}) \rightarrow (\mathbf{D}, \mathbf{L})
\]

Such that:
\begin{enumerate}[nosep]
  \item \textbf{Determinism}: \texttt{authorize}$(\mathbf{T}, \mathbf{P}, \boldsymbol{\Pi})$ always returns the same $\mathbf{D}$ for the same inputs (no model inference in the evaluation path).
  \item \textbf{Completeness}: for every valid $(\mathbf{T}, \mathbf{P}, \boldsymbol{\Pi})$, a decision $\mathbf{D}$ is returned.
  \item \textbf{Fail-closed}: if $\mathbf{P}$ is invalid, expired, suspended, or $\boldsymbol{\Pi}$ is unavailable, $\mathbf{D} = \text{DENY}$.
  \item \textbf{Non-bypassability}: $\mathbf{T}$ cannot execute unless $\mathbf{D} = \text{ALLOW}$ (platform-level enforcement).
  \item \textbf{Auditability}: every $(\mathbf{T}, \mathbf{P}, \boldsymbol{\Pi}, \mathbf{D})$ tuple produces a signed $\mathbf{L}$.
\end{enumerate}

\subsection{Authorization Algorithm}

\begin{figure}[H]
\centering
\begin{lstlisting}[basicstyle=\ttfamily\small,frame=single,backgroundcolor=\color{white},escapeinside={(*}{*)}]
Algorithm 1: OAP Authorization
Input:  tool_call T, passport P, policy_packs (*$\Pi$*)[]
Output: decision D, audit_entry L

 1: if P.status (*$\notin$*) {ACTIVE} then
 2:   return (DENY, sign(T,P,DENY,"passport_"+P.status))
 3: if T.capability_id (*$\notin$*) P.capabilities then
 4:   return (DENY, sign(T,P,DENY,"oap.unknown_capability"))
 5: (*$\pi$*) (*$\leftarrow$*) lookup((*$\Pi$*)[],T.capability_id)
 6: if (*$\pi$*) = (*$\emptyset$*) then
 7:   return (DENY, sign(T,P,DENY,"oap.fail_closed"))
 8: if P.assurance_level < (*$\pi$*).min_assurance then
 9:   return (DENY, sign(T,P,DENY,"oap.assurance_insufficient"))
10: for each rule r (*$\in$*) (*$\pi$*).rules do
11:   if evaluate(r.condition,T.params,P.limits) = true then
12:     return (DENY, sign(T,P,DENY,r.deny_code))
13: if P.limits.approval_required = true then
14:   return (ESCALATE, sign(T,P,ESCALATE,"oap.approval_required"))
15: return (ALLOW, sign(T,P,ALLOW,"oap.allowed"))
\end{lstlisting}
\caption{OAP Authorization Algorithm. Line 14 (ESCALATE) is specified but not yet implemented in the reference.}
\label{fig:algorithm}
\end{figure}

\subsection{Property Analysis}

\begin{proposition}
Under the trust assumptions of Section~\ref{sec:formal}.1, Algorithm~1 satisfies Properties 1--5.
\end{proposition}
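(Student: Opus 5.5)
The plan is to verify each property by direct inspection of Algorithm~1. Its control flow is a finite, straight-line sequence of guarded early returns (lines 1--15), so each property reduces to a case analysis over which line returns. Throughout, the three trust assumptions of Section~\ref{sec:formal}.1 carry the weight that the algorithm alone cannot: platform trust for non-bypassability, implementation trust for determinism of the primitives, and key trust for the integrity of signatures.

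\textbf{Determinism and completeness.} For Property~1, every step depends only on $(\mathbf{T},\mathbf{P},\boldsymbol{\Pi})$:
\begin{itemize}[nosep]
  \item the status and membership tests (lines 1, 3);
  \item \texttt{lookup} (line 5);
  \item the assurance comparison, under the total order L0 $<\dots<$ L4FIN (line 8);
  \item \texttt{evaluate} (line 11), over a rule list iterated in its fixed declared order, so the first firing deny code is well defined.
\end{itemize}
No step samples or calls a model. By implementation trust, \texttt{lookup} and \texttt{evaluate} are deterministic functions, so $\mathbf{D}$ is a function of the inputs. I will note that $\mathbf{L}$ contains a timestamp and so is not itself deterministic. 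This is why the property is stated for $\mathbf{D}$ only.

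For Property~2, the algorithm contains exactly one bounded loop, over the finite list $\pi.\texttt{rules}$. Every branch ends in a \texttt{return}, and line 15 is an unconditional fallthrough. Provided \texttt{evaluate} terminates on every valid input, every execution returns some $\mathbf{D}\in\{\text{ALLOW},\text{DENY},\text{ESCALATE}\}$. This termination follows if the rule language is a loop-free condition grammar; I will state this explicitly as part of implementation trust.

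\textbf{Fail-closed and auditability.} For Property~3, I split the failure cases:
\begin{itemize}[nosep]
  \item a suspended or revoked $\mathbf{P}$ is caught at line 1;
  \item an unavailable $\boldsymbol{\Pi}$ makes $\pi=\emptyset$, which is caught at line 7.
\end{itemize}
The ``invalid'' and ``expired'' cases are not explicit lines of the algorithm. For these I will rely on step 1 of Figure~\ref{fig:architecture}, signature verification at passport load, together with key trust. A passport whose Ed25519 signature fails, or whose expiry has passed, is mapped to a non-ACTIVE status before line 1. Making this mapping precise is the main obstacle, since it is a modelling convention that should be stated as part of the proof rather than derived.

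For Property~5, I observe that each of the six return sites (lines 2, 4, 7, 9, 12, 14/15) wraps its decision in \texttt{sign}. Under key trust, the resulting $\mathbf{L}$ is an unforgeable Ed25519 signature binding $(\mathbf{T},\mathbf{P},\mathbf{D},\text{reason})$.

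\textbf{Non-bypassability.} Property~4 is not a property of the algorithm in isolation. It follows from platform trust: the runtime invokes the hook before execution and proceeds only on ALLOW. Combined with Property~2, which guarantees a decision always exists, and Property~3, which ensures unavailability yields DENY rather than silence, no execution path reaches the tool without an ALLOW. I will also note that ESCALATE does not permit execution, and that the operator fail-open override falls outside the proposition's scope.
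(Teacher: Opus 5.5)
Your proposal is correct and matches the paper's proof sketch essentially line for line: determinism, completeness, fail-closed and auditability are each checked by case analysis over the return sites of Algorithm~1, and non-bypassability is delegated to the platform-trust assumption. You are also more careful than the paper in places. You note that $\mathbf{L}$'s timestamp makes only $\mathbf{D}$ deterministic, and that invalid or expired passports are not caught by any explicit line, so mapping them to a non-ACTIVE status must be stated as a convention; the paper simply attributes ``invalid passport'' to lines 1--2.
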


\emph{Proof sketch.} \textbf{Determinism} follows from the algorithm containing no sampling, external calls, or time-dependent branches; \texttt{evaluate()} operates over a decidable fragment (comparisons and set membership over finite domains---no loops, no recursion, no external state). \textbf{Completeness}: every code path returns a $(\mathbf{D}, \mathbf{L})$ pair; the fail-closed path at line~7 handles missing policies. \textbf{Fail-closed}: lines 1--2 (invalid passport), 5--7 (unknown capability, missing policy), and 8--9 (insufficient assurance) all return DENY. \textbf{Non-bypassability} relies on the platform trust assumption: if the framework correctly invokes the hook, the tool call cannot proceed without a returned ALLOW. \textbf{Auditability}: every return path calls \texttt{sign()}, producing a timestamped, Ed25519-signed log entry.

Non-bypassability depends on runtime behavior (the framework must call the hook), which cannot be proven from the algorithm alone---hence this is a proposition conditioned on platform trust, not a theorem. TEE-based attestation~\cite{proofofguardrail} could strengthen this assumption.

\textbf{Decidability and complexity.} OAP policies are restricted to a decidable fragment: comparisons over finite numeric domains and set membership over finite string sets. No loops, recursion, or external state are permitted in policy rules. This ensures that \texttt{evaluate()} always terminates in $O(n)$ time where $n$ is the number of rules in the matched policy pack (typically 3--8 rules), avoiding the halting problem risks associated with Turing-complete policy languages. PCAS's Datalog-derived language~\cite{pcas} is also decidable by construction but permits recursive rule composition; OAP trades expressiveness for guaranteed constant-time evaluation per rule.

\textbf{Why other approaches do not satisfy these properties.} Model alignment is probabilistic (temperature $> 0$, stochastic sampling) and can be overridden by adversarial prompts---it satisfies neither determinism nor non-bypassability. Post-hoc evaluation is retrospective, operating over completed runs---it satisfies neither completeness nor non-bypassability. Sandboxed execution contains blast radius but does not prevent execution within the sandbox and cannot enforce semantic business rules. Pre-action authorization is a distinct requirement, complementary to all three.

\section{Three Architectures for Runtime Agent Safety}
\label{sec:architectures}

We identify three distinct architectural approaches to runtime agent safety. Each operates at a different layer, catches a different class of failure, and misses what the others catch.

\subsection{Taxonomy}

\begin{table*}[t]
\centering
\small
\begin{tabularx}{\textwidth}{L{2.8cm} L{3.5cm} L{3.5cm} L{3.5cm}}
\toprule
\textbf{Dimension} & \textbf{Pre-Action Authorization} & \textbf{Sandboxed Execution} & \textbf{Model-Based Screening} \\
\midrule
Operates at & Decision layer (before action) & Execution environment (during action) & Model inference layer (before action) \\
Mechanism & Policy evaluation against declared capabilities and limits & Kernel/hypervisor isolation (microVM, gVisor, K3s) & Lightweight model classifies intent \\
Prevents unauthorized actions & Yes (at semantic/policy level) & Limited (at resource/network level, not semantic) & Probabilistic \\
Contains blast radius & No & Yes & No \\
Per-action audit trail & Yes (cryptographically signed) & Partial (execution logs) & No \\
Business rule enforcement & Yes (spending limits, domain allowlists, data classification) & No & Limited \\
Zero-day containment & No & Yes & No \\
Prompt injection defense & Yes (model convinced, policy denies) & No (agent acts within sandbox) & Probabilistic \\
Overhead & ${\sim}$53\,ms per decision (p50) & 100--300\,ms cold start & 50--200\,ms per screening call \\
Compliance artifact & Yes (signed decision record) & No & No \\
Examples & OAP, PCAS~\cite{pcas} & NemoClaw~\cite{nemoclaw}, E2B~\cite{e2b}, Modal~\cite{modal}, ceLLMate~\cite{cellmate} & Google ADK~\cite{google_adk}, Safiron~\cite{safiron} \\
\bottomrule
\end{tabularx}
\caption{Taxonomy of runtime agent safety architectures. No single architecture covers all dimensions.}
\label{tab:taxonomy}
\end{table*}

No single architecture covers all dimensions. A production deployment requires at least pre-action authorization and sandboxed execution; model-based screening adds defense-in-depth where the threat model includes adversarial intent classification.

\textbf{Notes on the taxonomy.} Sandboxed execution systems like ceLLMate~\cite{cellmate} and NemoClaw~\cite{nemoclaw} do prevent unauthorized actions at the resource/network level (e.g., blocking outbound traffic to specific hosts). The distinction is that they cannot enforce semantic business rules (spending limits, recipient allowlists) that require knowledge of the organizational policy. Model-based screening with deterministic decoding (temperature=0) may produce reproducible outputs, but the screening model's decision boundary is learned and can be shifted by adversarial inputs, distinguishing it from a static policy.

\subsection{Complementarity}

These architectures are complementary, not competing. Consider a concrete scenario: a prompt-injected agent attempts to read \texttt{/etc/passwd} via a \texttt{data.file.read} tool call.

\begin{itemize}[nosep]
  \item \textbf{Pre-action authorization (OAP):} The passport's \texttt{allowed\_\allowbreak{}classifications} does not include system files. Decision: DENY with \texttt{oap.capability\_\allowbreak{}missing}. The read never executes.
  \item \textbf{Sandboxed execution:} The sandbox's filesystem restrictions block access to \texttt{/etc/passwd}. The read attempt executes but is contained.
  \item \textbf{Model-based screening:} The screening model may or may not flag the intent as malicious, depending on how the prompt injection is framed.
\end{itemize}

Now consider a different scenario: the agent makes a legitimate-looking API call (\texttt{payments.charge}, \$500, USD) within the sandbox's network permissions but exceeding the agent's \$100 per-transaction~limit.

\begin{itemize}[nosep]
  \item \textbf{Pre-action authorization:} Decision: DENY with \texttt{oap.limit\_exceeded}. The charge never executes.
  \item \textbf{Sandboxed execution:} The sandbox allows the network call. The \$500 charge executes.
  \item \textbf{Model-based screening:} The screening model sees a valid API call pattern. No flag raised.
\end{itemize}

Microsoft's Defender for AI Agents~\cite{microsoft_defender} exemplifies the combined approach: a pre-action security evaluator analyzes intent before execution, and actions then run in a controlled environment. This layered architecture (pre-action authorization + sandboxed execution) is the minimum for production agents that handle money, data, or code.

\section{Evaluation}
\label{sec:evaluation}

\subsection{Adversarial Testbed: APort Vault CTF}

APort operates \textbf{Vault} (vault.aport.io), a live adversarial testbed in which participants attempt to manipulate an AI banking agent into making unauthorized transfers through social engineering. The agent uses a frontier language model for conversation and an OAP policy engine for authorization. The model's judgment can be compromised through conversation; the policy engine evaluates deterministically from the signed passport.

\textbf{Results (as of March 2026):}

\begin{table}[H]
\centering
\small
\begin{tabular}{lr}
\toprule
\textbf{Metric} & \textbf{Value} \\
\midrule
Total sessions & 1,151 \\
Total authorization decisions & 4,437 \\
Decisions denied & 2,419 (54.5\%) \\
Top denial: \texttt{oap.unknown\_capability} & 1,453 \\
Top denial: \texttt{oap.merchant\_forbidden} & 412 \\
Top denial: \texttt{oap.evaluation\_error} & 173 (see note) \\
Tier 5 (``The Vault'') attempts & 879 \\
Tier 5 successful breaches & 0 \\
Tier 5 unclaimed prize & \$5,000 \\
\bottomrule
\end{tabular}
\caption{APort Vault CTF results.}
\label{tab:ctf_results}
\end{table}

\textbf{Key finding:} In multiple sessions, participants successfully convinced the AI model that a transfer was authorized, legitimate, and urgent. The model generated the tool call. The OAP policy engine denied it because the amount exceeded \texttt{currency\_limits.USD.max\_per\_tx} or the recipient was not in \texttt{allowed\_recipients}. The model's judgment was overridden through adversarial conversation; the deterministic policy evaluation was not.

\textbf{Built-in baseline: Tier 1 vs.\ Tier 5.} The Vault's five tiers (T1--T5, distinct from OAP assurance levels L0--L4FIN) provide a natural controlled comparison. Tier~1 (``The Intern'') uses the same model, interface, and OAP infrastructure but with a permissive policy (wildcard merchants, high limits). Tier~5 (``The Vault'') uses a restrictive policy (zero capabilities, \$0 limit). Against a comparable population of attackers:

\begin{table}[H]
\centering
\small
\begin{tabular}{lccccc}
\toprule
\textbf{Condition} & \textbf{Att.} & \textbf{OAP Dec.} & \textbf{Wins} & \textbf{Succ.} & \textbf{Block} \\
\midrule
T1: Permissive & 788 & 586 & 588 & 74.6\% & 1\% \\
T5: Restrictive & 879 & 1,911 & 0 & 0\% & 100\% \\
\bottomrule
\end{tabular}
\caption{T1 vs.\ T5 baseline comparison. ``Att.''~= attempts, ``OAP Dec.''~= OAP policy decisions, ``Succ.''~= success rate, ``Block''~= OAP block rate.}
\label{tab:baseline}
\end{table}

At T1, 202 of 788 attempts (25.6\%) never triggered an OAP decision: the model declined to generate a \texttt{transfer\_funds} tool call, representing the model alignment layer filtering requests before they reached the policy engine. Of the 586 that reached OAP, 581 were allowed and 5 were denied. (Seven T1 wins predate full OAP integration during the pre-launch beta period, accounting for the 588 vs.\ 581 discrepancy.) At T5, every attempt that reached the policy engine was denied. The policy layer, not the model, determined the outcome: when policy permits most actions (T1), 74.6\% of social engineering attempts succeed; when policy enforces strict capability scoping (T5), 0\% succeed.

\textbf{Note on \texttt{oap.evaluation\_error}.} The 173 \texttt{evaluation\_error} denials (7.1\% of all denials) represent malformed tool call contexts: missing required fields, incorrect parameter types, or incomplete context objects. These are input validation failures at the policy evaluation boundary, not non-determinism in the evaluation engine. The engine deterministically returns DENY with \texttt{oap.evaluation\_error} when it cannot evaluate incomplete input. All 173 occurred at T5 (150) and T1--T4 (23), consistent with adversarial participants sending deliberately malformed requests.

\textbf{Threats to validity.} (1) Vault participants self-select and skew toward social engineering rather than protocol-level attacks; results may not generalize to APT-grade adversaries. (2) The T1/T5 comparison is observational, not a randomized controlled trial: participants chose which tiers to attempt, and T5 attracted more persistent attackers (median 2 turns at T1 vs.\ 2 turns at T5, but with a longer tail of multi-turn attempts). (3) The Vault tests a single domain (banking); generalization to code execution or multi-agent delegation is not demonstrated. (4) The T5 policy permits only transfers within the passport's \texttt{allowed\_recipients} and \texttt{currency\_limits}; participants did not have access to the policy specification. A companion empirical paper with the full CTF methodology, attack taxonomy, and statistical analysis is forthcoming.

\subsection{Performance Benchmarks}

Measured on production workloads ($N{=}1{,}000$ requests per mode):

\begin{table}[H]
\centering
\small
\begin{tabular}{lcccc}
\toprule
\textbf{Mode} & \textbf{p50} & \textbf{p95} & \textbf{p99} & \textbf{N} \\
\midrule
Cloud API (agent\_id, path) & 53\,ms & 63\,ms & 76\,ms & 1,000 \\
Cloud API (agent\_id, body) & 53\,ms & 62\,ms & 77\,ms & 1,000 \\
Cloud API (passport, path) & 54\,ms & 63\,ms & 74\,ms & 1,000 \\
Cloud API (passport, body) & 53\,ms & 63\,ms & 71\,ms & 1,000 \\
Local evaluation & 174\,ms & 243\,ms & 358\,ms & 1,000 \\
\bottomrule
\end{tabular}
\caption{OAP authorization latency benchmarks.}
\label{tab:benchmarks}
\end{table}

Cloud API median is 53\,ms (server-side processing at Cloudflare edge; excludes client-side network round-trip). All four API configurations show consistent p50 (53--54\,ms) and p99 under 77\,ms. Local evaluation (no network) has a median of 174\,ms due to Python subprocess overhead. These latencies are acceptable for agent tool calls, which are themselves I/O-bound (tool execution typically ranges from 200\,ms to several seconds).

\textbf{Breakdown (cloud API, approximate; some steps execute in parallel):}

\begin{table}[H]
\centering
\small
\begin{tabular}{lr}
\toprule
\textbf{Component} & \textbf{Time} \\
\midrule
Passport lookup + cache & 20\,ms \\
Policy evaluation & 15\,ms \\
Decision signing (Ed25519) & 10\,ms \\
HTTP overhead (TLS, parsing, serialization) & ${\sim}$8\,ms \\
\textbf{Synchronous total (blocking)} & \textbf{${\sim}$53\,ms} \\
\bottomrule
\end{tabular}
\caption{Cloud API latency breakdown.}
\label{tab:breakdown}
\end{table}

Audit log write (${\sim}$9\,ms) executes asynchronously after the decision is returned and does not contribute to measured latency.

\subsection{Coverage Analysis: OAP vs.\ OWASP Agentic Top 10}

\begin{table}[H]
\centering
\small
\begin{tabularx}{\columnwidth}{l c X}
\toprule
\textbf{OWASP Risk} & \textbf{Coverage} & \textbf{Mechanism} \\
\midrule
Agent Goal Hijack & Partial & Policy denies actions outside declared capabilities \\
Tool Misuse & Partial & Per-tool capability check + parameter constraints \\
Privilege Escalation & Partial & Capability-scoped passport blocks cross-capability escalation \\
Identity \& Privilege Abuse & Full & Assurance-level gating; Ed25519 verification \\
Improper Output Handling & None & Out of scope (model output, not action) \\
Excessive Agency & Partial & \texttt{max\_calls\_per\_minute}, \texttt{approval\_required} \\
Agentic Supply Chain & Partial & \texttt{allowed\_servers} for MCP; \texttt{allowed\_domains} for web \\
Knowledge Poisoning & None & Out of scope (content, not action) \\
Logging \& Monitoring & Partial & Signed audit log; no anomaly detection \\
Uncontrolled Autonomy & Partial & Fail-closed; ESCALATE not yet implemented \\
\bottomrule
\end{tabularx}
\caption{OAP coverage of OWASP Agentic Top 10 risks.}
\label{tab:owasp}
\end{table}

OAP provides full or partial coverage for 8 of 10 OWASP Agentic risks. The two uncovered risks (Improper Output Handling, Knowledge Poisoning) operate at the content layer, orthogonal to action-authorization.

\subsection{Implementation Status}

\begin{table}[H]
\centering
\small
\begin{tabular}{ll}
\toprule
\textbf{Component} & \textbf{Detail} \\
\midrule
Specification & OAP v1.0, DOI: 10.5281/zenodo.18901596 \\
License & Apache 2.0 \\
Core package & \texttt{@aporthq/aport-agent-guardrails} v1.0.15 \\
Framework adapters & 7 (5 production + 1 beta + 1 proposed) \\
Policy packs & 21 published \\
Signing algorithm & Ed25519 (PKCS8 encoding) \\
Hash chain & SHA-256, per-agent, write-once KV \\
Passport registry & Cloudflare Workers, multi-region \\
Service discovery & \texttt{.well-known/oap/} (RFC 8615) \\
\bottomrule
\end{tabular}
\caption{OAP implementation status.}
\label{tab:implementation}
\end{table}

\section{Standards Alignment}
\label{sec:standards}

OAP aligns with NIST AI RMF~\cite{nist_rmf} functions (GOVERN, MAP, MEASURE, MANAGE) and maps to five NIST SP 800-53 control families (AC, AU, IA, SC, SI). Detailed mappings are in Appendix~\ref{app:nist}. The more significant standards contribution is the proposal of a new control category:

\subsection{Proposed Control Category: Pre-Action Authorization (PAA)}

We propose that standards bodies recognize pre-action authorization as a distinct control category:

\begin{itemize}[nosep]
  \item \textbf{PAA-1}: The organization SHALL define a machine-readable authorization policy specifying which tool calls an AI agent is permitted to execute, under what conditions, at what assurance level.
  \item \textbf{PAA-2}: The organization SHALL implement a platform-level hook that enforces the policy synchronously before each tool call, independent of the model's reasoning.
  \item \textbf{PAA-3}: The organization SHALL issue verifiable credentials (agent passports) binding agents to authorized capability scopes.
  \item \textbf{PAA-4}: The organization SHALL maintain a tamper-evident audit log of all authorization decisions.
  \item \textbf{PAA-5}: In the absence of a valid authorization decision, tool calls SHALL be denied by default.
\end{itemize}

PCAS~\cite{pcas} and this work independently converge on the need for deterministic pre-action enforcement as a distinct architectural layer, differing primarily in policy language and deployment model. This convergence suggests the category is real, not an artifact of a single system's design.

\section{Discussion}

\subsection{Limitations}

\textbf{Delegation chains.} OAP v1.0 does not formalize a delegation model for multi-agent scenarios. An agent that delegates to a sub-agent with narrowed permissions requires a delegation chain specification. This is planned for v1.1.

\textbf{Policy expressiveness.} OAP policies use declarative JSON/YAML, simpler to author and audit but less expressive than PCAS's Datalog-derived language~\cite{pcas}. Complex conditional logic (e.g., ``allow if the caller's manager approved AND the target account is in the same jurisdiction AND the amount is below the 30-day rolling average'') requires either policy composition or extension of the rule language.

\textbf{Adaptive policies.} OAP policies are static; they must be authored and updated manually. AgentGuardian~\cite{agentguardian} demonstrates that policies learned from execution traces can adapt to novel attack patterns. A hybrid approach (static policy baseline + learned anomaly detection) is a natural extension.

\textbf{Scope boundary.} OAP enforces authorization at tool call boundaries. Actions that do not pass through a tool calling interface (direct model output rendering, in-context retrieval, side-channel communication) are out of scope. Sandboxed execution~\cite{nemoclaw,e2b} provides the complementary containment layer for these cases.

\textbf{Composability attacks.} OAP evaluates each tool call independently. A sequence of individually-permitted calls could collectively achieve an unauthorized outcome (e.g., multiple small transfers that exceed an aggregate limit---a ``structuring'' attack). The OAP v1.1 draft includes sliding-window policy packs that maintain per-agent aggregate state to detect and block sequence-based structuring, alongside delegation chain formalization.

\textbf{ESCALATE completeness.} The ESCALATE decision path (Algorithm~1, line~14) is specified but not yet implemented in the reference. Deployments requiring human-in-the-loop approval currently implement ESCALATE at the application layer.

\textbf{Scale.} Current benchmarks are at modest scale (${\sim}$1,000 concurrent evaluations). Performance at 10,000+ evaluations per second, typical of large enterprise deployments, requires dedicated benchmarking.

\textbf{Platform trust.} OAP assumes the framework runtime is not compromised (Section~\ref{sec:formal}.1). If an adversary controls the runtime, the hook can be bypassed entirely. Integration with TEE-based attestation~\cite{proofofguardrail} could reduce this assumption.

\subsection{The Full Safety Stack}

A production agent safety architecture requires at minimum four layers: (1) model alignment, which shapes behavior probabilistically at training time; (2) pre-action authorization, which blocks unauthorized tool calls deterministically before execution; (3) sandboxed execution, which contains the blast radius of actions that pass authorization; and (4) post-hoc evaluation, which finds systemic issues across completed runs and feeds improvements back into alignment and policy.

Remove any one and there is a structural gap. Alignment without authorization is probabilistic enforcement. Authorization without sandboxing lacks blast-radius containment. Sandboxing without authorization permits semantically unauthorized actions within permitted resource scope. Evaluation without runtime enforcement finds problems only after damage is done.

\subsection{From Security to Accountability}

The same infrastructure that enforces security constraints (identity, policy, cryptographic proof) serves a broader accountability function. Policy packs can encode quality gates (minimum test coverage before deployment), operational contracts (SLA requirements before task delegation), and compliance controls (data residency before cross-border transfer). The agent passport is less an access control token and more an accountability credential: a portable record of what an agent is authorized, required, and expected to do.

\section{Conclusion}

Autonomous AI agents executing tool calls without standardized authorization is the access control gap of the current era, analogous to the absence of RBAC in early multi-user systems and the absence of OAuth in early web APIs. The gap is real: 27.2\% of engineering teams building custom authorization from scratch~\cite{gravitee}, 492+ MCP servers deployed without authentication~\cite{invariantlabs}, 824+ malicious skills in the OpenClaw ecosystem~\cite{cisco}, and autonomous agents leaking SSNs through single-verb reframing~\cite{agentsofchaos}. In a live adversarial testbed (4,437 decisions across 1,151 sessions), social engineering succeeded 74.6\% of the time when the model was the primary defense; deterministic policy enforcement reduced this to 0\% (879 highest-tier attempts, \$5,000 prize unclaimed).

Pre-action authorization is distinct from alignment, evaluation, and sandboxing. It complements all three.

OAP is an open-source, specification-first implementation of this layer, deployed across six agent frameworks (with one proposed) and evaluated against live adversarial attacks. Limitations remain: delegation chains are not yet formalized, policy expressiveness is bounded by declarative JSON, benchmarks are at modest scale (${\sim}$1,000 concurrent evaluations; enterprise-scale validation at 10,000+ evaluations/second is needed), and the ESCALATE path is specified but not yet implemented. The specification is released under Apache 2.0 (DOI: 10.5281/zenodo.18901596).

\section*{Disclosure}

The author is the founder of APort Technologies Inc., which develops and operates the OAP reference implementation evaluated in this paper. The specification~\cite{oap_spec} is released under Apache 2.0, all artifacts are publicly available, and any organization can build a conforming implementation. OAP's OpenClaw integration (Section~\ref{sec:oap}) predates the Cisco ClawHavoc research~\cite{cisco}; the Cisco study was conducted independently and no commercial relationship exists between APort and OpenClaw. The Vault CTF dataset will be released (anonymized) alongside the companion empirical paper.

\bibliographystyle{plain}

\appendix

\section{Artifacts and Reproducibility}
\label{app:artifacts}

All artifacts required to reproduce the results in this paper are publicly available under Apache 2.0:

\begin{table}[H]
\centering
\small
\begin{tabularx}{\columnwidth}{l l X}
\toprule
\textbf{Artifact} & \textbf{Location} & \textbf{Purpose} \\
\midrule
OAP v1.0 Spec & DOI: 10.5281/zenodo.18901596 & Canonical spec \\
Spec source & github.com/aporthq/aport-spec & Living spec \\
Policy packs & github.com/aporthq/aport-policies & Policy library \\
Reference impl. & github.com/aporthq/aport-agent-guardrails & Evaluator \\
Adversarial testbed & vault.aport.io & Live CTF \\
API endpoint & aport.io / aport.id & Registry \\
API docs & docs.aport.io & Integration \\
NIST RFI & NIST-2025-0035 (March 2026) & Standards \\
A2A proposal & github.com/a2aproject/A2A/discussions/1404 & Protocol \\
\bottomrule
\end{tabularx}
\caption{Specification artifacts.}
\label{tab:artifacts}
\end{table}

\textbf{To reproduce performance benchmarks:}
\begin{enumerate}[nosep]
  \item Install \texttt{@aporthq/aport-agent-guardrails} v1.0.15 from npm.
  \item Create a passport via \texttt{npx aport-id} or at aport.id.
  \item Run \texttt{POST /api/verify/policy/\{capability\_id\}} with timing instrumentation ($N{=}1{,}000$).
\end{enumerate}

\textbf{To reproduce CTF results:} live results are available at vault.aport.io; the anonymized dataset will be released with the companion empirical paper.

\section{NIST Alignment Detail}
\label{app:nist}

\textbf{NIST AI RMF Mapping} (function-level alignment; sub-category mapping requires organizational context):

\begin{table}[H]
\centering
\small
\begin{tabularx}{\columnwidth}{l X}
\toprule
\textbf{AI RMF Function} & \textbf{OAP Contribution} \\
\midrule
GOVERN & Policy packs encode organizational authorization policy in machine-readable, auditable form \\
MAP & Assurance levels (L0--L4FIN) map agent capabilities to risk categories \\
MEASURE & Signed per-decision audit log enables denial rate, escalation rate, and coverage measurement \\
MANAGE & Policy packs updated without model retraining; passport suspension propagates within 30\,s \\
\bottomrule
\end{tabularx}
\caption{NIST AI RMF mapping.}
\label{tab:nist_rmf}
\end{table}

\textbf{NIST SP 800-53 Alignment:}

\begin{table}[H]
\centering
\small
\begin{tabularx}{\columnwidth}{l X}
\toprule
\textbf{Control Family} & \textbf{OAP Mechanism} \\
\midrule
AC (Access Control) & Policy packs enforce least-privilege per tool type; passports are scoped credentials \\
AU (Audit) & Every tool call decision produces a signed log entry with decision, reason, and timestamp \\
IA (Identification) & Ed25519-signed passport provides cryptographic agent identity \\
SC (System Protection) & \texttt{.well-known/oap/} discovery over HTTPS; Ed25519 decision signatures \\
SI (System Integrity) & Fail-closed policy; hash-chained audit trail with write-once semantics \\
\bottomrule
\end{tabularx}
\caption{NIST SP 800-53 alignment.}
\label{tab:nist_800_53}
\end{table}

\section{Denial Response Handling}
\label{app:denial}

When OAP returns a DENY decision, the response includes:

\begin{lstlisting}[language=json]
{
  "decision": "DENY",
  "deny_code": "oap.limit_exceeded",
  "reason": "Amount $500 exceeds max_per_tx limit of $100",
  "decision_id": "dec_abc123...",
  "signature": "ed25519:..."
}
\end{lstlisting}

Agent frameworks use the structured denial to generate contextual responses rather than generic failures. For example, a banking agent receiving \texttt{oap.merchant\_forbidden} can inform the user that the recipient is not in the approved list, without exposing internal policy details.

\end{document}